 		\def\thm@space@setup{%
 		  \thm@preskip=\parskip \thm@postskip=0pt
 		}
\newcommand{\blind}{0}
\newcommand{\tpose}{^\top}
\DeclareMathOperator{\tr}{tr}
\DeclarePairedDelimiter\floor{\lfloor}{\rfloor}
\DeclareMathOperator{\Prob}{\mathbb{P}}
\newcommand{\norm}[1]{\left\lVert#1\right\rVert}
\newtheorem{theorem}{Theorem}[section]
\newtheorem{lemma}{Lemma}[section]
\newtheorem{definition}{Definition}[section]
\newtheorem*{remark}{Remark}
\newcommand{\vect}[1]{\boldsymbol{#1}}
\newcommand{\mat}[1]{\boldsymbol{#1}}
\newcommand{\mean}{\vect{\mu}}
\newcommand{\var}{\mat{\Sigma}}
\newcommand{\meanS}[1]{\vect{\mu}_{#1}} 
\newcommand{\varS}[1]{\hat{\mat{\Sigma}}_{#1}} 
\newcommand{\varU}[1]{\mat{\Sigma}_{#1}} 
\newcommand{\sample}[1]{\vect{\xi}^{(#1)}}
\newcommand{\randvar}{{\vect{\xi}}}
\newcommand{\etavar}{{\vect{\eta}}}
\newcommand{\randvarUni}{\xi}
\newcommand{\sampleUni}[1]{\xi^{(#1)}}
\newcommand{\meanUni}{\mu}
\newcommand{\meanSUni}[1]{\mu_{#1}} 
\newcommand{\varUUni}[1]{\sigma_{#1}} 
\begin{document}

\if0\blind
{
\title{\bf Multivariate Chebyshev Inequality with Estimated Mean and Variance}
\author[1]{Bartolomeo Stellato}
\author[2]{Bart P.\ G.\ Van Parys}
\author[1]{Paul J.\ Goulart}
\affil[1]{Department of Engineering Science, University of Oxford}
\affil[2]{Operations Research Center, Massachusetts Institute of Technology}
\date{}
  \maketitle
\let\thefootnote\relax\footnote{Bartolomeo Stellato (e-mail: \textit{bartolomeo.stellato@eng.ox.ac.uk}) and Paul J.\ Goulart (e-mail: \textit{paul.goulart@eng.ox.ac.uk}), Control Group, Department of Engineering Science, University of Oxford, Parks Road, Oxford OX1 3PJ, United Kingdom; Bart P.\ G.\ Van Parys (e-mail: \textit{vanparys@mit.edu}) Operations Research Center, Massachusetts Institute of Technology, Cambridge, Massachusetts 02139. The authors would like to thank Sergio Grammatico for the fruitful suggestions.
}
} \fi

\if1\blind
{
  \bigskip
  \bigskip
  \bigskip
  \begin{center}
    {\LARGE\bf Multivariate Chebyshev Inequality with Estimated Mean and Variance}
\end{center}
  \medskip
} \fi

\bigskip
\begin{abstract}
A variant of the well-known Chebyshev inequality for scalar random variables can be formulated in the case where the mean and variance are estimated from samples. In this paper we present a generalization of this result to multiple dimensions where the only requirement is that the samples are independent and identically distributed. Furthermore, we show that as the number of samples tends to infinity our inequality converges to the theoretical multi-dimensional Chebyshev bound.
\end{abstract}

\noindent%
{\it Keywords: Chebyshev's Inequality, Probability Bounds, Sampling.}
\vfill


\section{Introduction}
The Chebyshev inequality \citeyearpar{Chebyshev:1867vg} is a fundamental result from probability theory and has been studied extensively  for more than a century in a wide range of sciences. The most common version of this result asserts that the probability that a scalar random variable $\randvarUni$ with distribution $\Prob$ differs from its mean $\meanUni \in \mathbb{R}$ by more than $\lambda\in \mathbb{R}_{>0}$ standard deviations $\sigma \in \mathbb{R}_{>0}$ satisfies the relation
\begin{equation}\label{eq:chebyshev_inequality}
\Prob \left(|\xi - \mu| \geq \lambda \sigma\right)\leq
 \min\left\{1,\frac{1}{\lambda^2}\right\}.
\end{equation}

Recent works by \citet{Chen:2007wt} and \citet{Navarro:fm} provide a closed form extension of \eqref{eq:chebyshev_inequality} to the multivariate case where the confidence intervals are ellipsoids centered at the population mean. Moreover, \citet{Navarro:2014ka} shows that the derived extension is tight. Another extension  of \eqref{eq:chebyshev_inequality} to more general ellipsoidal and polyhedral sets has been described in \citep{Vandenberghe:2007iv} where a multivariate version of the Chebyshev bound is computed as a solution to a semidefinite program (SDP)~\citep{Vandenberghe:1996fy}.

Although these results provide means to explicitly compute distribution-free probability bounds based only on the first two moments of $\Prob$, they are of limited practical value since one often does not know $\mu$ and $\sigma$ exactly. In practice, a common approach is to compute empirical estimates of $\mu$ and $\sigma$ via sampling and to substitute these estimated values into \eqref{eq:chebyshev_inequality}, although it can be shown that this method can lead to unreliable results in the event of poor estimates of the moments.
There is an extensive literature about empirical processes where the quality of the estimates is investigated, see e.g.\ \citet{Dudley:1978tl} and \citet{vandeGeer:2010tk}. However, these approaches suffer from two main problems. They either assume that the underlying distribution $\Prob$ has bounded support (e.g.\ Hoeffding's inequality), or they provide asymptotic results on the convergence rate that are valid only as the sample size tends to infinity. Unfortunately, neither of these two cases turns out to be helpful when we make no assumption on the support of the distribution and the number of samples is limited.

In the univariate case, \citet{Saw:1984tz} approached the problem of formulating an empirical Chebyshev inequality from a different direction.  Given $N$ i.i.d samples $\sampleUni{i},\dots,\sampleUni{N} \in \mathbb{R}$ from an unknown distribution $\Prob$, and their empirical mean $\meanSUni{N}$ and empirical standard deviation (SD) $\varUUni{N}$, Saw derives a Chebyshev inequality with respect to the $(N+1)^{\rm th}$ sample. The bound derived is remarkably simple and requires only a modification of the right-hand side of the theoretical bound in~\eqref{eq:chebyshev_inequality}, i.e.\
\begin{equation}\label{eq:sampled_chebyshev}
\begin{multlined}[0.4\textwidth]
\Prob^{N+1}\left(\left|\sampleUni{N+1}- \meanSUni{N}\right| \geq \lambda \varUUni{N}\right) \\
\leq \min\left\{1,\frac{1}{N+1} \floor*{\frac{(N+1)(N^2 - 1 + N\lambda^2)}{N^2\lambda^2}}\right\},
\end{multlined}
\end{equation}
where $\floor{\cdot}$ denotes the floor function\footnote{$\lambda$ corresponds to $k\sqrt{\frac{N+1}{N}}$ in~\citep{Saw:1984tz}. }.

Currently, there exists no counterpart of \eqref{eq:sampled_chebyshev} for the multivariate case. There have been only limited efforts to extend these results to multiple dimensions by making strong assumptions on the population. In \citep{Navarro:2014uk} the author derives a multivariate equivalent by assuming that the true distribution of the population is the empirical distribution over a given data set.

%

In this paper we derive a multivariate version of the inequality in~\eqref{eq:sampled_chebyshev} using the Euclidean norm, without requiring any further assumptions on the distrubution.
In addition, we show that the result converges to the multivariate Chebyshev inequality as computed in \citep{Vandenberghe:2007iv} for an ellipsoidal set centered at the mean.

\section{Main Results}
Before stating the main result, we require the following definition: \\

\begin{definition}
Let $\randvar\in \mathbb{R}^{n_{\randvar}}$ be a random variable and let $N\in \mathbb{Z}_{\geq n_\randvar}$.
Given $(N+1)$ i.i.d.\ samples $\sample{1}, \dots, \sample{N}, \sample{N+1}\in \mathbb{R}^{n_{\randvar}}$ of $\randvar$ with mean $\mu\in \mathbb{R}^{n_{\randvar}}$ and covariance matrix $\var\in \mathbb{R}^{n_{\randvar}\times n_{\randvar}}$, we define the \emph{empirical mean} as
\begin{equation}
	\meanS{N} \coloneqq \frac{1}{N}\sum_{i=1}^{N}\sample{i},
\end{equation}
and the \emph{unbiased} and \emph{biased empirical covariances} as
\begin{equation*}
\varU{N} \coloneqq \frac{1}{N-1} \sum_{i=1}^{N}(\sample{i} - \meanS{N})(\sample{i} - \meanS{N})\tpose, \qquad \varS{N} \coloneqq \frac{1}{N} \sum_{i=1}^{N}(\sample{i} - \meanS{N})(\sample{i} - \meanS{N})\tpose,
\end{equation*}
respectively.
\end{definition}

We can now state our main result, which is a multivariate version of the univariate result of \citet{Saw:1984tz}: \\

\begin{theorem}\label{thm:mutiv_cheb_ineq}
Let $\randvar\in \mathbb{R}^{n_{\randvar}}$ be a random variable and let $N\in \mathbb{Z}_{\geq n_\randvar}$.
Given $N+1$ i.i.d\ samples of $\randvar$ denoted as $\sample{1}, \dots, \sample{N}, \sample{N+1}\in \mathbb{R}^{N_{\randvar}}$, if we assume that $\varU{N}$ is nonsingular,
then for all $\lambda \in \mathbb{R}_{>0}$ it holds that:
\begin{equation}\label{eq:multivariate_sampled_chebyshev}
\begin{multlined}[0.4\textwidth]
\Prob^{N+1}\left((\sample{N+1}- \meanS{N})\tpose\varU{N}^{-1}(\sample{N+1} - \meanS{N}) \geq \lambda^2 \right) \\
\leq \min\left\{1,\frac{1}{N+1}\floor*{\frac{n_{\randvar} (N+1) (N^2 - 1 + N\lambda^2)}{N^2\lambda^2}}\right\}.
\end{multlined}
\end{equation}
\end{theorem}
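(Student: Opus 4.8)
The plan is to exploit the exchangeability of the samples to turn the probability on the left-hand side of~\eqref{eq:multivariate_sampled_chebyshev} into a purely combinatorial counting problem. For each index $j \in \{1,\dots,N+1\}$ let $\meanS{N}^{(-j)}$ and $\varU{N}^{(-j)}$ denote the empirical mean and unbiased covariance computed from the $N$ samples other than $\sample{j}$, and set $T_j \coloneqq (\sample{j}-\meanS{N}^{(-j)})\tpose (\varU{N}^{(-j)})^{-1}(\sample{j}-\meanS{N}^{(-j)})$, so that the quantity under study is exactly $T_{N+1}$. Since the $N+1$ samples are i.i.d., they are exchangeable, and hence each $T_j$ has the same distribution as $T_{N+1}$. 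Conditioning on the unordered multiset of samples, the test point is uniformly distributed among the $N+1$ observations, which gives
\begin{equation*}
\Prob^{N+1}(T_{N+1} \geq \lambda^2) = \frac{1}{N+1}\,\mathbb{E}\!\left[\sum_{j=1}^{N+1}\mathbf{1}\{T_j \geq \lambda^2\}\right].
\end{equation*}
It therefore suffices to bound, for every fixed configuration of points, the number of indices $j$ for which $T_j \geq \lambda^2$.

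Second, I would rewrite each $T_j$ in terms of a single scalar attached to the full sample. Let $m \coloneqq \tfrac{1}{N+1}\sum_{i=1}^{N+1}\sample{i}$ be the grand mean, let $S \coloneqq \sum_{i=1}^{N+1}(\sample{i}-m)(\sample{i}-m)\tpose$ be the total scatter matrix, and define the leverages $h_j \coloneqq (\sample{j}-m)\tpose S^{-1}(\sample{j}-m)$. A parallel-axis computation yields the two identities $\sample{j}-\meanS{N}^{(-j)} = \tfrac{N+1}{N}(\sample{j}-m)$ and $(N-1)\varU{N}^{(-j)} = S - \tfrac{N+1}{N}(\sample{j}-m)(\sample{j}-m)\tpose$. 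Applying the Sherman--Morrison formula to invert this rank-one update then expresses $T_j$ as a strictly increasing function of $h_j$ alone; explicitly, $T_j = \tfrac{(N+1)^2(N-1)\,h_j}{N\,(N-(N+1)h_j)}$. Solving $T_j \geq \lambda^2$ for $h_j$ shows that this event is equivalent to $h_j \geq h^\star$ with the explicit threshold $h^\star = \tfrac{N^2\lambda^2}{(N+1)(N^2-1+N\lambda^2)}$, which one checks satisfies $h^\star < \tfrac{N}{N+1}$.

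Third, the count is controlled by a trace identity. Because $\sum_{i=1}^{N+1}(\sample{i}-m)(\sample{i}-m)\tpose = S$, we have $\sum_{j=1}^{N+1} h_j = \tr\!\big(S^{-1}S\big) = \tr(I) = n_{\randvar}$, while every $h_j \geq 0$. Consequently at most $\floor*{n_{\randvar}/h^\star}$ of the leverages can exceed $h^\star$, so $\sum_{j} \mathbf{1}\{T_j \geq \lambda^2\} \leq \floor*{n_{\randvar}/h^\star}$ for every configuration. Substituting the value of $h^\star$ gives $n_{\randvar}/h^\star = n_{\randvar}(N+1)(N^2-1+N\lambda^2)/(N^2\lambda^2)$; inserting this into the exchangeability identity above, and bounding the probability trivially by $1$, yields exactly the right-hand side of~\eqref{eq:multivariate_sampled_chebyshev}.

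The main obstacle is the exact algebraic reduction of the leave-one-out Mahalanobis distance $T_j$ to the single leverage $h_j$: the parallel-axis and Sherman--Morrison steps must be carried out carefully, and one must confirm that each leave-one-out covariance $\varU{N}^{(-j)}$ is invertible so that all $T_j$ are well defined. The inequality $h^\star < \tfrac{N}{N+1}$ is what keeps the rank-one update $S - \tfrac{N+1}{N}(\sample{j}-m)(\sample{j}-m)\tpose$ positive definite on the relevant event and makes the equivalence $\{T_j \geq \lambda^2\} = \{h_j \geq h^\star\}$ legitimate, with the remaining degenerate configurations forming a negligible set under the standing nonsingularity hypothesis. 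Once this reduction is in place, the trace identity $\sum_j h_j = n_{\randvar}$ does the combinatorial work essentially for free, and is precisely the ingredient that generalises Saw's univariate bound (the case $n_{\randvar}=1$) to arbitrary dimension.
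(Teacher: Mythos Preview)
Your proposal is correct and follows essentially the same route as the paper: the leverages $h_j$ are just the squared norms $\|\vect{u}_j\|_2^2/(N+1)$ of the paper's normalized vectors, your trace identity $\sum_j h_j = n_{\randvar}$ is the content of the paper's Lemma~\ref{lem:lemA}, your parallel-axis identities are the leave-one-out version of Lemma~\ref{lem:lemB}, and the Sherman--Morrison reduction and the change of variables to $\lambda$ are identical. Your explicit exchangeability argument (conditioning on the unordered sample so that the test point is uniform over the $N+1$ indices) is in fact a cleaner statement of what the paper invokes when it says the $\vect{u}_i$ are ``i.i.d.'' and passes from the deterministic count in Lemma~\ref{lem:lemA} to the probability bound~\eqref{eqn:Prob_ui_form}.
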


\begin{remark}
  The inequality~\eqref{eq:multivariate_sampled_chebyshev} can be simplified by upper bounding the floor function by its argument
\begin{equation*}
  \begin{multlined}[0.4\textwidth]
  \Prob^{N+1}\left((\sample{N+1}- \meanS{N})\tpose\varU{N}^{-1}(\sample{N+1} - \meanS{N}) \geq \lambda^2 \right) \\
  \leq \min\left\{1,\frac{n_{\randvar} (N^2 - 1 + N\lambda^2)}{N^2\lambda^2}\right\}.
  \end{multlined}
\end{equation*}
\end{remark}
We can also show that our empirical bound is well behaved in the limit as $N\to\infty$, coinciding with the (tight) analytical bound computable using the method of \citet{Vandenberghe:2007iv}: \\

\begin{theorem}\label{thm:limit}
As $N\to\infty$, the right-hand side of \eqref{eq:multivariate_sampled_chebyshev} tends to
\begin{equation*}
\min \left\{1, \frac{n_{\randvar}}{\lambda^2} \right\},
\end{equation*}
which corresponds to the Multivariate Chebyshev inequality over ellipsoids shaped according to $\var$ and centered in $\mean$.
\end{theorem}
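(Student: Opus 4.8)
The plan is to treat this statement as a purely deterministic limit: all of the probabilistic content already resides in Theorem~\ref{thm:mutiv_cheb_ineq}, so what remains is to evaluate the limit of the real-valued bound on the right-hand side of~\eqref{eq:multivariate_sampled_chebyshev}. Abbreviating the argument of the floor as
\[
A_N \coloneqq \frac{n_{\randvar}(N+1)(N^2 - 1 + N\lambda^2)}{N^2\lambda^2},
\]
the bound reads $\min\bigl\{1,\tfrac{1}{N+1}\floor{A_N}\bigr\}$, and it suffices to show that $\tfrac{1}{N+1}\floor{A_N}\to n_{\randvar}/\lambda^2$; the claim for the full expression will then follow because $x\mapsto\min\{1,x\}$ is continuous.

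First I would eliminate the floor using the elementary sandwich $A_N - 1 < \floor{A_N}\le A_N$, which after division by $N+1$ gives
\[
\frac{A_N}{N+1} - \frac{1}{N+1} \;<\; \frac{1}{N+1}\floor{A_N} \;\le\; \frac{A_N}{N+1}.
\]
Next I would simplify the outer ratio, obtaining
\[
\frac{A_N}{N+1} = \frac{n_{\randvar}(N^2 - 1 + N\lambda^2)}{N^2\lambda^2} = \frac{n_{\randvar}}{\lambda^2}\left(1 - \frac{1}{N^2} + \frac{\lambda^2}{N}\right),
\]
which plainly tends to $n_{\randvar}/\lambda^2$ as $N\to\infty$, while the lower endpoint differs from it by only $\tfrac{1}{N+1}\to 0$. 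The squeeze theorem then yields $\tfrac{1}{N+1}\floor{A_N}\to n_{\randvar}/\lambda^2$.

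Finally I would invoke continuity of $\min\{1,\cdot\}$ to conclude that the right-hand side of~\eqref{eq:multivariate_sampled_chebyshev} converges to $\min\{1,n_{\randvar}/\lambda^2\}$, and then cite the tight multivariate Chebyshev bound of \citet{Vandenberghe:2007iv}, in agreement with \citet{Chen:2007wt} and \citet{Navarro:2014ka}, to identify this limiting quantity as the exact bound for an ellipsoid shaped according to $\var$ and centered at $\mean$. I do not expect a genuine obstacle in this argument: it is an elementary limit, and the only point deserving care is that the $O(1)$ discrepancy introduced by the floor is rescaled by $1/(N+1)$ and therefore vanishes. Because the continuity of the minimum lets me pass to the limit inside $\min\{1,\cdot\}$, no case split on whether the bound is attained by $1$ or by $n_{\randvar}/\lambda^2$ is required.
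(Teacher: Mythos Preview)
Your argument is correct: the squeeze $A_N-1<\floor{A_N}\le A_N$ divided by $N+1$, together with the algebraic simplification $\tfrac{A_N}{N+1}=\tfrac{n_{\randvar}}{\lambda^2}\bigl(1-\tfrac{1}{N^2}+\tfrac{\lambda^2}{N}\bigr)$, cleanly gives the limit $n_{\randvar}/\lambda^2$, and continuity of $x\mapsto\min\{1,x\}$ finishes the job.

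Your emphasis, however, is essentially the reverse of the paper's. You spend the proof on the elementary limit and then \emph{cite} the literature to identify $\min\{1,n_{\randvar}/\lambda^2\}$ as the multivariate Chebyshev bound for the ellipsoid $\{(\randvar-\mean)\tpose\var^{-1}(\randvar-\mean)\ge\lambda^2\}$. The paper does the opposite: it treats the limit as self-evident and devotes the entire proof of Theorem~\ref{thm:limit} to a self-contained derivation of that multivariate Chebyshev inequality. Concretely, it bounds $\Prob\bigl((\randvar-\mean)\tpose\var^{-1}(\randvar-\mean)\ge\lambda^2\bigr)$ by minimizing $\mathbb{E}[f(\etavar)]$ over quadratics $f$ majorizing the indicator of the ellipsoid complement, encodes the majorization via the S-procedure as linear matrix inequalities, and solves the resulting SDP analytically to obtain the optimum $\min\{1,n_{\randvar}/\lambda^2\}$. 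Your route is lighter and better matched to what is genuinely new in the statement (the convergence); the paper's route buys self-containment for the identification claim rather than relying on an external reference.
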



\section{Proof of the Main Results}

In order to prove our main results, we require two supporting lemmas. \\

\begin{lemma}\label{lem:lemA}
Let $k\in \mathbb{R}_{>0}$ and $N \in \mathbb{Z}_{\geq n_\randvar}$. Consider a set of vectors $\mathcal{U}_N \coloneqq \left\{\vect{u}_i\right\}_{i=1}^{N}$ with $\vect{u}_i \in \mathbb{R}^{n_{\randvar}}$ for all $i \in \{1,\dots,N\}$ satisfying the conditions
\begin{equation*}
\sum_{i=1}^{N} \vect{u}_{i} = 0_{n_{\randvar}}, \qquad \sum_{i=1}^{N} \vect{u}_{i}\vect{u}_{i}^{\top} = N \mat{I}_{n_{\randvar}\times n_{\randvar}}.
\end{equation*}
Define the subset of vectors in $\mathcal{U}_N$ with norm greater or equal to $k$ as
\begin{equation*}
J(\mathcal{U}_N,k) \coloneqq \left\{\vect{u}_i \in \mathcal{U}_N\;:\;\norm{\vect{u}_{i}}_{2} \geq k\right\},
\end{equation*}
where $\|\cdot\|_2$ is the Euclidean norm.
Then the cardinality of $J(\mathcal{U}_N,k)$ is bounded by
$
{\left|J(\mathcal{U}_N,k)\right|  \leq \floor*{\frac{n_{\randvar} N}{k^2}}
}$.
\end{lemma}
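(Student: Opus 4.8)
The plan is to exploit only the second-moment condition via a trace argument; notably the centering condition $\sum_{i=1}^{N} \vect{u}_i = 0_{n_{\randvar}}$ turns out to be unnecessary for this particular bound. First I would take the trace of both sides of the identity $\sum_{i=1}^{N} \vect{u}_i \vect{u}_i\tpose = N \mat{I}_{n_{\randvar}\times n_{\randvar}}$. Using linearity of the trace together with $\tr(\vect{u}_i \vect{u}_i\tpose) = \vect{u}_i\tpose \vect{u}_i = \norm{\vect{u}_i}_2^2$ and $\tr(\mat{I}_{n_{\randvar}\times n_{\randvar}}) = n_{\randvar}$, this collapses the matrix identity into the single scalar identity $\sum_{i=1}^{N} \norm{\vect{u}_i}_2^2 = N n_{\randvar}$.

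The second step is a one-line averaging (Markov-type) argument. Writing $m \coloneqq \left|J(\mathcal{U}_N,k)\right|$ for the number of vectors of norm at least $k$, and noting that each term $\norm{\vect{u}_i}_2^2$ is nonnegative, I would discard all terms outside $J(\mathcal{U}_N,k)$ to obtain $N n_{\randvar} = \sum_{i=1}^{N} \norm{\vect{u}_i}_2^2 \geq \sum_{\vect{u}_i \in J(\mathcal{U}_N,k)} \norm{\vect{u}_i}_2^2 \geq m k^2$, where the final inequality uses the defining property $\norm{\vect{u}_i}_2 \geq k$ for each of the $m$ vectors in $J(\mathcal{U}_N,k)$. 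Rearranging yields $m \leq N n_{\randvar}/k^2$.

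Finally, since $m$ is a nonnegative integer whereas $N n_{\randvar}/k^2$ need not be, I would sharpen the inequality by taking the floor, concluding $m \leq \floor*{N n_{\randvar}/k^2}$, which is exactly the claimed bound.

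I do not anticipate a genuine obstacle here: the entire argument hinges on converting the matrix condition into the scalar total $\sum_{i=1}^{N} \norm{\vect{u}_i}_2^2 = N n_{\randvar}$, after which the cardinality bound is immediate. The only points worth verifying are that $m$ is legitimately integer-valued (so that the floor step is valid) and that the bound indeed uses no hidden dependence on the centering condition, both of which I expect to confirm trivially.
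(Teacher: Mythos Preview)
Your proposal is correct and follows essentially the same route as the paper: both arguments take the trace of the second-moment condition to obtain $\sum_{i=1}^{N}\norm{\vect{u}_i}_2^2 = N n_{\randvar}$, drop the nonnegative terms outside $J(\mathcal{U}_N,k)$, lower-bound the remaining terms by $k^2$ each, and then apply the floor since the cardinality is an integer. Your observation that the centering condition $\sum_i \vect{u}_i = 0$ is unused is accurate; the paper's proof does not invoke it either.
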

\begin{proof}
Observe that
\begin{equation*}
\norm{\vect{u}_{i}}_{2} \geq k \quad \iff \quad \vect{u}_i\tpose \vect{u}_i \geq k^2.
\end{equation*}
Summing both sides of the preceding inequality over $J(\mathcal{U}_N, k)$ produces
\begin{equation*}
k^2 \left|J(\mathcal{U}_N,k)\right| \leq \sum_{\vect{u}_i \in J(\mathcal{U}_N, k)}\vect{u}_i\tpose \vect{u}_i \leq \sum_{i=1}^{N}\vect{u}_i\tpose \vect{u}_i = \tr\left(\sum_{i=1}^{N}\vect{u}_i\vect{u}_i\tpose\right) = n_{\randvar} N
\end{equation*}
and the result follows immediately.
\end{proof}
~\\

%

\begin{lemma}\label{lem:lemB}
The following relations hold:
\begin{align}
&\sample{N+1} - \meanS{N+1} = \frac{N}{N+1}(\sample{N+1} - \meanS{N}) \label{prop:multiv_cheb1}\\
&\varS{N+1} = \frac{N-1}{N+1}\varU{N} + \frac{N}{(N+1)^2}(\sample{N+1} - \meanS{N})(\sample{N+1} - \meanS{N})\tpose.\label{prop:multiv_cheb2}
\end{align}
\end{lemma}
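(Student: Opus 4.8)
The plan is to derive both identities by direct algebraic manipulation from the definitions of the empirical mean and covariances, using the relation between $\meanS{N+1}$ and $\meanS{N}$ as the central organizing tool. First I would split the sum defining $\meanS{N+1}$ into its first $N$ terms and its last term to obtain $\meanS{N+1} = \frac{1}{N+1}\left(N\meanS{N} + \sample{N+1}\right)$. Substituting this expression into $\sample{N+1} - \meanS{N+1}$ and simplifying the resulting scalar coefficients gives \eqref{prop:multiv_cheb1} at once. As a by-product this also yields the shift $\meanS{N} - \meanS{N+1} = -\frac{1}{N+1}\left(\sample{N+1} - \meanS{N}\right)$, which I would use in the second part.

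To prove \eqref{prop:multiv_cheb2}, I would write $\varS{N+1} = \frac{1}{N+1}\sum_{i=1}^{N+1}(\sample{i} - \meanS{N+1})(\sample{i} - \meanS{N+1})\tpose$ and separate the sum into the indices $i \in \{1,\dots,N\}$ and the final index $N+1$. For the first group I would substitute $\sample{i} - \meanS{N+1} = (\sample{i} - \meanS{N}) - \frac{1}{N+1}(\sample{N+1} - \meanS{N})$, expand the outer product into four terms, and sum over $i$. The two cross terms each carry the factor $\sum_{i=1}^{N}(\sample{i} - \meanS{N})$, which vanishes by definition of $\meanS{N}$; this cancellation is the key step of the proof. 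What remains from the first group is $(N-1)\varU{N}$ together with a rank-one term in $(\sample{N+1} - \meanS{N})(\sample{N+1} - \meanS{N})\tpose$ with coefficient $\frac{N}{(N+1)^2}$. The final index contributes a second rank-one term via \eqref{prop:multiv_cheb1}, with coefficient $\frac{N^2}{(N+1)^2}$.

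Finally I would add the two rank-one coefficients, noting that $\frac{N}{(N+1)^2} + \frac{N^2}{(N+1)^2} = \frac{N}{N+1}$, and divide the entire sum by $N+1$ to read off \eqref{prop:multiv_cheb2}. I do not expect any serious obstacle here: the argument reduces to careful bookkeeping of scalar coefficients, and the one structural fact it relies on — the disappearance of the cross terms — follows immediately from the centering property of $\meanS{N}$. The hard part, such as it is, will simply be keeping the algebra tidy when expanding the outer products.
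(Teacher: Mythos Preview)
Your proposal is correct and follows essentially the same route as the paper: the first identity is obtained exactly as you describe, and for the second the paper invokes Welford's recurrence $\mat{S}_{N+1} = \mat{S}_{N} + \frac{N}{N+1}(\sample{N+1}-\meanS{N})(\sample{N+1}-\meanS{N})\tpose$ as a cited fact, which is precisely the relation your direct expansion (with the cross terms vanishing by the centering of $\meanS{N}$) derives from scratch. The only difference is that your argument is self-contained rather than relying on the citation.
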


\begin{proof}
The first relation can be obtained directly by writing
\begin{equation*}
\begin{aligned}
\sample{N+1} - \meanS{N+1} &= \sample{N+1} - \frac{1}{N+1}\left(\sample{N+1} + \sum_{i=1}^{N}\sample{i}\right)%
\end{aligned}
\end{equation*}
and collecting terms; the same result appears in \citep{Welford:1962tw}.  The second relation can be found by first defining the partial sums $\mat{S}_N$ as
\[
\mat{S}_N \coloneqq \sum_{i=1}^{N}\left(\sample{i} -\meanS{N}\right)\left(\sample{i} -\meanS{N}\right)\tpose,
\]
for which \citet{Welford:1962tw} provides  (with obvious modifications) the recurrence relation
\[
\mat{S}_{N+1} = \mat{S}_{N} + \frac{N}{N+1}\left(\sample{N+1}-\meanS{N}\right)\left(\sample{N+1}-\meanS{N}\right)\tpose.
\]
The result then follows by applying the identities $(N+1)\varS{N+1} = \mat{S}_{N+1}$ and $(N-1)\varU{N} = \mat{S}_N$.
\end{proof}

We are now in a position to prove both of our main results:

\subsection*{Proof of Theorem~\ref{thm:mutiv_cheb_ineq}}

 Since $\varU{N}$ is assumed nonsingular, it follows that $\varS{N+1} \succeq \varU{N} \succ 0$, i.e.\ $\varS{N+1}$ is positive definite.




%
%

%
Normalize each of the $N+1$ samples $\sample{i}$ using
\begin{equation}\label{eq:multiv_cheb_ui}
\vect{u}_i \coloneqq \left(\varS{N+1}\right)^{-1/2}\left(\sample{i} - \meanS{N+1}\right), \quad \forall i \in \left\{1,\dots,N+1\right\}
\end{equation}
so that
\begin{equation}\label{eq:cond_ui_lemma}
\sum_{i=1}^{N+1} \vect{u}_{i} = 0_{d}, \quad \sum_{i=1}^{N+1} \vect{u}_{i}\vect{u}_{i}^{\top} = (N+1) \mat{I}_{n_{\randvar}\times n_{\randvar}},
\end{equation}
and~\eqref{eq:cond_ui_lemma} satisfies Lemma~\ref{lem:lemA}. Since all of the vectors $\vect{u}_i$ are i.i.d. and not more than $J(\mathcal{U}_{N+1},k)$ of these $N+1$ vectors have norm greater or equal to $k$, we have from Lemma~\ref{lem:lemA} that
\begin{equation}\label{eqn:Prob_ui_form}
\Prob^{N+1} \left(\norm{\vect{u}_{N+1}}_2 \geq k\right)
\le \frac{J(\mathcal{U}_{N+1},k)}{N+1} = \frac{1}{N+1} \floor*{\frac{n_{\randvar} (N+1)}{k^2}}.
\end{equation}

Considering next the inequality $\norm{\vect{u}_{N+1}}_2 \geq k$,  apply \eqref{prop:multiv_cheb1} and \eqref{eq:multiv_cheb_ui} to obtain the equivalent condition
\[
\left(\sample{i} - \meanS{N+1}\right)\tpose\varS{N+1}^{-1}\left(\sample{i} - \meanS{N+1}\right) \geq k^2.
\]

By using Lemma~\ref{lem:lemB} it is possible to define $\varS{N+1}$ as%
\begin{equation}\label{eq:recurs_N_plus_1}
  \varS{N+1} = \frac{N-1}{N+1}\varU{N} + \frac{1}{N}\left(\sample{i} - \meanS{N+1}\right)\left(\sample{i} - \meanS{N+1}\right)^\top.
\end{equation}

Let us define $q_N \coloneqq \left(\sample{i} - \meanS{N}\right)\tpose\varU{N}^{-1}\left(\sample{i} - \meanS{N}\right)$ and $q_{N+1}\coloneqq \left(\sample{i} - \meanS{N+1}\right)\tpose\varU{N}^{-1}\left(\sample{i} - \meanS{N+1}\right)$. Note that $q_{N+1} = \frac{N^2}{(N+1)^2}q_{N}$ from~\eqref{prop:multiv_cheb1}. Applying the Sherman-Woodbury-Morrison identity~\citep{sherman1949adjustment} to~\eqref{eq:recurs_N_plus_1} we can invert matrix $\varS{N+1}$ obtaining
\begin{align*}
\left(\sample{i} - \meanS{N+1}\right)\tpose\varS{N+1}^{-1}\left(\sample{i} - \meanS{N+1}\right) &=
 \frac{N+1}{N-1}q_{N+1} -
\left(1 + \frac{1}{N}\frac{N+1}{N-1}q_{N+1}\right)^{-1}\frac{1}{N}\left(\frac{N+1}{N-1}\right)^2q_{N+1}^2 \\[1ex]
&= \frac{(N+1)q_{N+1}}{(N-1)N + (N+1)q_{N+1}} \geq k^2.
\intertext{where the first equality has been pre- and post-multiplied by $\left(\sample{i} - \meanS{N+1}\right)^\top$ and $\left(\sample{i} - \meanS{N+1}\right)$ respectively. The latter inequality can be rewritten in terms of $q_N$ and rearranged to}
q_{N} &\geq \frac{(N^2-1)k^2}{N(N-k^2)},
\end{align*}

so that \eqref{eqn:Prob_ui_form} is equivalent to
\begin{equation}\label{eqn:Prob_k_form}
\Prob^{N+1} \left(\left(\sample{i} - \meanS{N}\right)\tpose\varU{N}^{-1}\left(\sample{i} - \meanS{N}\right) \geq \frac{(N^2-1)k^2}{N(N-k^2)}\right)
\le \frac{1}{N+1} \floor*{\frac{n_{\randvar} (N+1)}{k^2}}.
\end{equation}

Finally, define $\lambda$ such that
\begin{equation*}
\lambda^2 = \frac{(N^2 - 1)k^2}{N(N-k^2)}, \quad \text{so that} \quad k^2 = \frac{N^2\lambda^2}{N^2 - 1 + N\lambda^2}.
\end{equation*}
Direct substitution into \eqref{eqn:Prob_k_form} then produces the desired inequality
\begin{equation}
\begin{aligned}
\Prob^{N+1}&\left((\sample{N+1}- \meanS{N})\tpose\varU{N}^{-1}(\sample{N+1} - \meanS{N}) \geq \lambda^2\right)
\leq \frac{1}{N+1}\floor*{\frac{n_{\randvar} (N+1)(N^2 - 1 + N\lambda^2)}{N^2\lambda^2}}.
\end{aligned}\tag*{\qed}
\end{equation}%

\subsection*{Proof of Theorem~\ref{thm:limit}}
Given $\mean\in \mathbb{R}^{n_{\randvar}}$ and $\var \in \mathbb{R}^{n_{\randvar} \times n_{\randvar}}$, $\var \succeq 0$, 
as the mean and covariance of the random variable $\randvar\in \mathbb{R}^{n_{\randvar}}$  respectively, we now derive the multivariate Chebyshev inequality bounding the probability
\begin{equation}\label{eq:multiv_cheb_proof_1}
\Prob \left((\randvar - \mean)\tpose \var^{-1}(\randvar-\mean) \geq \lambda^2\right),
\end{equation}
which is the probability of the complement of the ellipsoid shaped by $\var$ centered at the mean $\mean$.

Without loss of generality, we shift the coordinate system to the mean $\mean$ by defining the variable $\etavar \coloneqq \randvar - \mean$ with zero mean $\mean_{\etavar} =0$ and variance $\var_{\eta} = \var$. Let us define $\mathcal{E}$ as the ellipsoid
\begin{equation*}
\mathcal{E} \coloneqq \left\{\etavar\tpose \frac{\var^{-1}}{\lambda^2}\etavar - 1 < 0\right\}.
\end{equation*}
The problem of computing an upper bound on the probability of $\etavar$ falling in the complement $\mathcal{E}^c$ of the ellipsoid $\mathcal{E}$ is equivalent to bounding the probability \eqref{eq:multiv_cheb_proof_1}. Let $1_{\mathcal{E}^c}(\cdot)$ denote the indicator function of set $\mathcal{E}^c$, i.e. if $1_{\mathcal{E}^c}(\etavar) = 0$ if $\etavar \notin \mathcal{E}^c$ and $1_{\mathcal{E}^c}(\etavar ) = 1$ if $\etavar  \in \mathcal{E}^c$; with the obvious relation $\Prob\left(\etavar  \in \mathcal{E}^{c}\right) = \mathbb{E}\left(1_{\mathcal{E}^{c}}\right)$. In order to bound the latter, we can define a quadratic function \citep[Section 7.4.1]{Boyd:2004uz} $f(\etavar ) = \etavar ^\top \mat{P} \etavar  + 2\vect{q}^\top \etavar  + r$ such that $f(\etavar ) \geq 1_{\mathcal{E}^c}(\etavar )$ for all $\etavar  \in \mathbb{R}^{n_\etavar }$. Equivalently, this inequality can be written as $f(\etavar ) \geq 1, \forall \etavar  \in \mathcal{E}^c$ and  $f(\etavar ) \geq 0, \forall \etavar  \in \mathbb{R}^{n_{\etavar }}$. By taking the expected value we obtain
\begin{equation}
  \mathbb{E}(f(\etavar )) \geq \mathbb{E}\left(1_{\mathcal{E}^{c}}(\etavar )\right) = \Prob\left(\etavar  \in \mathcal{E}^{c}\right).
\end{equation}
Hence, the problem of upper-bounding~\eqref{eq:multiv_cheb_proof_1} is equivalent to solving the convex problem
\begin{subequations}\label{eq:sdp_upper_bound}
 \begin{alignat}{2}
&\text{minimize} \quad & & \mathbb{E}(f(\etavar )) \\
           & \text{subject to:} 
           && f(\etavar ) \geq 1,\; \etavar  \in \mathcal{E}^{c} \label{eq:sdp_upper_bound:constr1}\\
&&& f(\etavar ) \geq 0, \; \etavar  \in \mathbb{R}^{n_\etavar }.\label{eq:sdp_upper_bound:constr2}
\end{alignat}
\end{subequations}
Since $f(\etavar )$ is a quadratic function and  we know the first and second moments of $\etavar $, we can compute its expected value as
\begin{equation}\label{eq:sdp:exp_value}
  \mathbb{E}(f(\etavar )) = \tr\left(\Sigma_{\etavar}  \mat{P}\right) + 2\vect{q}^\top \mean_{\etavar}  + r =  \tr\left(\Sigma \mat{P}\right) + r,
\end{equation}
where $\mean_{\etavar}  = 0$ and $\var_{\etavar}  = \var$. The constraint~\eqref{eq:sdp_upper_bound:constr2} can be rewritten as the following linear matrix inequality (LMI)
\begin{equation}\label{eq:sdp:cond1}
  \begin{bmatrix}
      \mat{P} & \vect{q}\\
      \vect{q}\tpose & r
      \end{bmatrix} \succeq 0,
\end{equation}
see~\citep{Vandenberghe:1996fy}. By making use of the S-procedure \citep[Section B.2]{Boyd:2004uz}, we can define a scalar $\tau\geq 0$ and rewrite~\eqref{eq:sdp_upper_bound:constr1} as another LMI
\begin{equation} \label{eq:sdp:cond2}
  \begin{bmatrix}
  \mat{P} & \vect{q}\\
  \vect{q}\tpose & r - 1
  \end{bmatrix} \succeq \tau  \begin{bmatrix}
  \frac{\var^{-1}}{\lambda^2} & 0\\
  0 & - 1
  \end{bmatrix},\; \tau \geq 0.
\end{equation}


Finally, from~\eqref{eq:sdp:exp_value},~\eqref{eq:sdp:cond1} and~\eqref{eq:sdp:cond2} we can rewrite~\eqref{eq:sdp_upper_bound} as a Semidefinite Program (SDP)~\citep{Vandenberghe:1996fy}:
 \begin{alignat*}{2}
&\text{minimize} \quad & & \tr\left(\var \mat{P}\right) + r  \\
           & \text{subject to:} 
           && \begin{bmatrix}
    \mat{P} & \vect{q}\\
    \vect{q}\tpose & r - 1
    \end{bmatrix} \succeq \tau  \begin{bmatrix}
  \frac{\var^{-1}}{\lambda^2} & 0\\
  0 & - 1
  \end{bmatrix}
\\
&&& \begin{bmatrix}
    \mat{P} & \vect{q}\\
    \vect{q}\tpose & r
    \end{bmatrix} \succeq 0, \quad \tau \geq 0.
\end{alignat*}
Since the ellipsoid is centered at the origin, it is possible to choose $\vect{q}=0$ and rewrite the problem as:
 \begin{alignat}{2}\label{eq:prob_boyd_cheb}
&\text{minimize} \quad & & \tr\left(\var \mat{P}\right) + r  \\
           & \text{subject to:} 
           && \mat{P} \succeq \tau  \frac{\var^{-1}}{\lambda^2}, \quad r \geq 1 - \tau\\
    &&& \mat{P} \succeq 0, \; r \geq 0,\; \tau \geq 0.
\end{alignat}
The objective function and the constraints in~\eqref{eq:prob_boyd_cheb} are linear in the optimization variables. The optimal solution therefore exists at the boundary of the feasible region: i.e. $r = 1 - \tau$. We need to distinguish two different cases. The first corresponds to
\begin{equation*}
r = 0 \iff \tau = 1 \implies \mat{P} = \frac{\var^{-1}}{\lambda^2},
\end{equation*}
with the optimum being $n_{\randvar} / \lambda^2$. The second case is
\begin{equation*}
\tau = 0 \iff r = 1 \implies \mat{P} = 0,
\end{equation*}
and the corresponding optimum is $1$. Finally, by computing the inverse coordinate transformation to get back $\randvar$, it is possible to write the multivariate inequality explicitly as
\begin{equation}
\Prob \left((\randvar - \mean)\tpose \var^{-1}(\randvar-\mean) \geq \lambda^2\right) \leq \min \left\{1,\frac{n_{\randvar}}{\lambda^2}\right\}.\tag*{\qed}
\end{equation}

\section{Applications}
\label{sec:Applications}
Applications of this result include any field wherein the Chebyshev inequality must be applied to distributions for which the mean and covariance are unknown.

A direct application of this empirical Chebyshev inequality is outlier detection. Given a probability bound, it is possible to compute a threshold $\lambda$ and construct a confidence ellipsoidal set from the sample mean and covariance of the first $N$ samples. Then, if the Mahalanobis distance of the $N+1$\textsuperscript{th} sample exceeds $\lambda$, it can be considered an outlier. In \citep{Hardin:2005bz} a similar approach is described making use of the quantiles of the chi-square- or F-distributions in case of normal data. We expect our bound to give more conservative results than the method proposed in \citep{Hardin:2005bz}, but with more general validity since we make no assumptions on the samples' distribution.

Another application involves solving stochastic optimization problems using data-driven information about the uncertainty without knowing its distribution. Following the approach in~\citep{Chen:2007hoa} and~\citep{Bertsimas:2013wga}, we can make use of our empirical Chebyshev inequality to construct ellipsoidal uncertainty sets with predefined probability guarantees. We can then approximate stochastic programs, that are intractable in their general form~\citep{Shapiro:2005dk}, with robust optimization problems~\citep{BenTal:2009wu} and enforce the optimal solution to be feasible for all the uncertainty realizations inside our ellipsoidal uncertainty set. The latter condition implies the same probabilistic guarantees on the original stochastic program. In certain cases, e.g.\ when the constraints are linear and the uncertainty enters linearly in the coefficients, the robust reformulations are convex and can be solved efficiently as second-order cone programs~(SOCPs)~\citep{BenTal:2009wu}.

\section{Conclusions}
We have derived a generalization of the empirical Chebyshev inequality in multiple dimensions with the only requirement that the given samples are independent and identically distributed. The derived bound scales linearly with the dimension of the random vector and has the same structure as the one-dimensional inequality.

Since many of the common distributions studied in both theory and practice are unimodal, an interesting improvement of this result could be to introduce the assumption of unimodality in order to derive less pessimistic bounds. Another possible extension is to investigate other norms (e.g. $\infty$ or $1$-norm) and compare the right-hand side of the respective reformulations to understand which one is more appropriate for different kinds of distributions.

There are many possible application of this theoretical result appearing whenever the Chebyshev inequality is employed without knowing the population distribution. In particular, the inequality can be exploited to construct confidence sets which can be used in several situations such as outliers detection or stochastic programs reformulations.

\bibliographystyle{chicago}
\bibliography{bibliography}

\end{document}